\begin{document}

\title{Reducing NEXP-complete problems to DQBF\thanks{To appear in the proceedings of FMCAD 2022.}}

\author{
\IEEEauthorblockN{Fa-Hsun Chen}
\IEEEauthorblockA{\textit{National Taiwan University}\\
r10944015@ntu.edu.tw}
\and
\IEEEauthorblockN{Shen-Chang Huang}
\IEEEauthorblockA{\textit{National Taiwan University}\\
b07902135@ntu.edu.tw}
\and
\IEEEauthorblockN{ Yu-Cheng Lu}
\IEEEauthorblockA{\textit{National Taiwan University}\\
luyucheng@protonmail.com}
\and
\IEEEauthorblockN{Tony Tan}
\IEEEauthorblockA{\textit{National Taiwan University}\\
tonytan@csie.ntu.edu.tw}
}

\maketitle

\begin{abstract}
We present an alternative proof of the NEXP-hardness of the satisfiability of {\em Dependency Quantified Boolean Formulas} (DQBF).
Besides being simple, our proof also gives us a general method to reduce NEXP-complete problems to DQBF.
We demonstrate its utility by presenting explicit reductions
from a wide variety of NEXP-complete problems to DQBF such as (succinctly represented) 3-colorability,
Hamiltonian cycle, set packing and subset-sum
as well as NEXP-complete logics such as 
the Bernays-Sch\"onfinkel-Ramsey class, the two-variable logic and the monadic class. 
Our results show the vast applications of DQBF solvers which recently have gathered 
a lot of attention among researchers.
\end{abstract}

\begin{IEEEkeywords}
Dependency quantified boolean formulas (DQBF), NEXP-complete problems, polynomial time (Karp) reductions, succinctly represented problems
\end{IEEEkeywords}


\section{Introduction}
\label{sec:intro}

The last few decades have seen a tremendous development of boolean SAT solvers
and their applications in many areas of computing~\cite{sat-handbook}.
Motivated by applications in verification and synthesis of hardware/software
designs~\cite{Jiang09,BalabanovJ15,SchollB01,GitinaRSWSB13,BloemKS14,ChatterjeeHOP13,KuehlmannPKG02},
researchers have recently looked at the generalization of boolean formulas known as 
{\em dependency quantified boolean formulas} (DQBF).

While solving boolean SAT is ``only'' $\npt$-complete,
for DQBF the complexity jumps to $\nexpt$-complete~\cite{PetersonR79}.
This makes solving DQBF quite a challenging research topic.
Nevertheless there has been exciting progress.
See, e.g.,~\cite{BalabanovCJ14,FrohlichKB12,Ge-ErnstSW19,KullmannS19,WimmerSB19,WimmerWSB16,WimmerRM017,Kovasznai16,SchollW18}
and the references within, as well as
solvers such as iDQ~\cite{FrohlichKBV14}, dCAQE~\cite{TentrupR19}, HQS~\cite{GitinaWRSSB15,WimmerKBS017} and DQBDD~\cite{SicS21}.
A natural question to ask is if we can use DQBF solvers to solve any $\nexpt$-complete problems
-- similar to how SAT solvers are used to solve any $\npt$-complete problems.

%

In this short paper we show how to reduce 
a wide variety of $\nexpt$-complete problems to DQBF, especially the succinctly represented problems that 
recently have found applications in hardware/software engineering~\cite{avi-succ-83,KiniM018,PavlogiannisSSC20}.
We present another proof for the $\nexpt$-hardness of DQBF.
We actually give two proofs.
The first is by a very simple reduction from {\it succinct 3-colorability}~\cite{yanna-succ-86}.
The second is by utilizing the notion that we call {\em succinct projection}.
It is the second one that we view more interesting since it gives us a general method to reduce
any $\nexpt$-complete problem to DQBF.

The main idea is quite standard: 
We encode the accepting runs of a non-deterministic Turing machine (with exponential run time)
with boolean functions of polynomial arities.
However, we observe that the input-output relation of these functions can 
actually be ``described'' by small circuits/formulas.
Succinct projections are simply deterministic algorithms that construct these circuits efficiently.
This simple observation is a deviation from the standard definition of $\nexpt$, that
a language in $\nexpt$ is a language with an exponentially long certificate.

Using succinct projections, we present reductions from
various $\nexpt$-complete problems such as
(succinct) {\em Hamiltonian cycle}, {\em set packing} and {\em subset sum}.
We believe our technique can be easily modified for many other natural problems.
Note that the reduction in~\cite{PetersonR79} gives little insight on how it can be used 
to obtain explicit reductions from concrete $\nexpt$-complete problems.

We also present the reductions from well known $\nexpt$-complete logics
such as {\em the Bernays-Sch\"onfinkel-Ramsey class}, {\em two-variable logic} ($\fotwo$)
and {\em the L\"owenheim class}~\cite{Lewis80,Furer83,gkv,bgg97,llt21}.
In fact we show that they are essentially equivalent to DQBF.
Note that these are logics that have found applications in AI~\cite{dl-handbook03}, 
databases~\cite{kotek-paper} and automated reasoning~\cite{aut-reasoning},
but lack implementable algorithms.
Prior to our work, the only algorithm known for these logics is to 
``guess'' a model (of exponential size) and then verify that it is indeed a model of the input formula.
Recent work in~\cite{kotek-paper} reduces $\fotwo$ formulas to exponentially long SAT instances,
but the experimental results are not promising.

We hope that the technique introduced in this short paper
can lead to richer applications of DQBF solvers as well as a wide variety of benchmarks
which in turn can lead to further development.
It is also open whether the class $\nexpt$ has a {\em bona-fide} problem~\cite{yanna-succ-86}.
Our paper demonstrates that DQBF can be a good candidate -- akin to how boolean SAT
is the central problem in the class $\npt$.

This paper is organized as follows.
In Sect.~\ref{sec:prelim} we review some definitions and terminology.
In Sect.~\ref{sec:succ-proj} we reprove the $\nexpt$-completeness of solving DQBF.
In Sect.~\ref{sec:reduction} and~\ref{sec:logic}
we present concrete reductions from some $\nexpt$-complete problems and logics to DQBF instances.
Missing details can be found in the appendix.


\section{Preliminaries}
\label{sec:prelim}

Let $\Sigma=\{0,1\}$.
We usually use the symbol $\va,\vb,\vc$ (possibly indexed) to denote a string in $\Sigma^*$
with $|\va|$ denoting the length of $\va$.
We use $\vx,\vy,\vz,\vu,\vv$ to denote vectors of boolean variables.
The length of $\vx$ is denoted by $|\vx|$.
We write $C(\vu)$ to denote a (boolean) circuit $C$ with input gates $\vu$.
When the input gates are not relevant or clear from the context, 
we simply write $C$.
For $\va\in \Sigma^{|\vu|}$, $C(\va)$ denotes the value of $C$ 
when we assign the input gates $\vu$ with $\va$.
All logarithms have base $2$.

A {\em dependency quantified boolean formula} (DQBF) in prenex normal form is a formula of the form:
\begin{align}
\label{eq:dqbf}
\Psi & := 
\forall x_1 \cdots \ \forall x_{n}\
\exists y_1(\vz_1) \cdots \ \exists y_m(\vz_m) \quad \psi
\end{align}
where each $\vz_i$ is a vector of variables from $\{x_1,\ldots,x_n\}$
and $\psi$, called {\em the matrix}, is a quantifier-free boolean formula using variables $x_1,\ldots,x_n,y_1,\ldots,y_m$. 
The variables $x_1,\ldots,x_n$ are called {\em the universal variables},
$y_1,\ldots,y_m$ {\em the existential variables}
and each $\vz_i$ {\em the dependency set} of $y_i$.


A DQBF $\Psi$ in the form~(\ref{eq:dqbf}) is {\em satisfiable}, 
if for every $1\leq i \leq m$,
there is a function $s_i:\Sigma^{|\vz_i|}\to\Sigma$ such that
by replacing each $y_i$ with $s_i(\vz_i)$,
the formula $\psi$ becomes a tautology.
The function $s_i$ is called the {\em Skolem function} for $y_i$.
In this case, we also say that $\Psi$ is satisfiable by the Skolem functions $s_1,\ldots,s_m$.
The problem $\satdqbf$ is defined as:
On input DQBF $\Psi$ in the form~(\ref{eq:dqbf}),
decide if it is satisfiable.

Since many $\nexpt$-complete problems use circuits as the succinct representations of the inputs,
we allow the matrix $\psi$ to be in {\em circuit form},
i.e., $\psi$ is given as a (boolean) circuit with input gates $x_1,\ldots,x_n,y_1,\ldots,y_m$.
This does not effect the generality of our results, since
every DQBF in circuit form can be converted to one in the standard formula form
as stated in Proposition~\ref{prop:dnf}.

\begin{proposition}
\label{prop:dnf}
Every DQBF $\Psi$ in the form of (\ref{eq:dqbf}) in circuit form
can be converted in polynomial time into an equisatisfiable DQBF formula $\Psi'$ whose matrix is in DNF.
Moreover, $\Psi$ and $\Psi'$ have the same existential variables (with the same dependency set).
\end{proposition}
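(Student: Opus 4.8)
The plan is to adapt the Tseitin transformation, but with a twist that leaves the existential variables and their dependency sets completely untouched: the auxiliary gate variables will be quantified \emph{universally} rather than existentially. Write the matrix of $\Psi$ as a circuit $C(x_1,\ldots,x_n,y_1,\ldots,y_m)$, and introduce one fresh variable $v_g$ for every gate $g$ of $C$, collected into a vector $\vv$, with $v_{\mathrm{out}}$ naming the variable of the output gate. First I would form the standard Tseitin encoding $T(\vx,\vv,\vy)$, a CNF of size linear in $C$ that conjoins, for each gate, the clauses asserting that $v_g$ equals the value computed by $g$ from its inputs (e.g.\ for $g=\mathrm{AND}(a,b)$ the clauses $(\neg v_g \vee a)$, $(\neg v_g \vee b)$, $(v_g \vee \neg a \vee \neg b)$, and analogously for $\mathrm{OR}$ and $\mathrm{NOT}$). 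The crucial property of $T$ is that once the circuit inputs $\vx,\vy$ are fixed there is a \emph{unique} assignment to $\vv$ satisfying $T$, and under that assignment $v_{\mathrm{out}}=C(\vx,\vy)$.

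Next I would define the matrix of $\Psi'$ to be the DNF
\[
\psi' \ := \ \neg T(\vx,\vv,\vy)\ \vee\ v_{\mathrm{out}},
\]
obtained by pushing the negation through the CNF $T$ via De Morgan (each clause of $T$ becomes one conjunctive term of $\neg T$) and appending the single-literal term $v_{\mathrm{out}}$; this is manifestly a DNF computable in polynomial (indeed linear) time. I would then set
\[
\Psi' \ := \ \forall x_1\cdots\forall x_n\ \forall \vv\ \exists y_1(\vz_1)\cdots\exists y_m(\vz_m)\ \psi',
\]
that is, the gate variables $\vv$ are added as new \emph{universal} variables while the existential variables $y_1,\ldots,y_m$ are retained verbatim with their original dependency sets $\vz_1,\ldots,\vz_m$. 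This immediately secures the ``moreover'' clause of the statement, since no existential variable is created, renamed, or has its dependency set altered.

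Finally I would verify that $\Psi$ and $\Psi'$ are satisfiable by exactly the same Skolem functions $s_1,\ldots,s_m$; write $s(\vz)$ for the tuple of substituted Skolem values. For the forward direction, fix Skolem functions satisfying $\Psi$ and reuse them: given any $\vx$ and any value of $\vv$, if $\vv$ is inconsistent with the gates then $\neg T$ is true, and otherwise $\vv$ is the unique consistent assignment, whence $v_{\mathrm{out}}=C(\vx,s(\vz))=1$, so $\psi'$ holds either way. For the converse, fix Skolem functions satisfying $\Psi'$ and fix any $\vx$; instantiate the universal block $\vv$ with the unique assignment consistent with the inputs $(\vx,s(\vz))$, which makes $\neg T$ false and therefore forces $v_{\mathrm{out}}=1$, i.e.\ $C(\vx,s(\vz))=1$. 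As $\vx$ was arbitrary, the same Skolem functions satisfy $\Psi$.

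I expect the only real subtlety to be the choice of quantifier type for the gate variables. Making them universal, so that they receive no Skolem functions, is precisely what both preserves the dependency structure and enforces, for every $\vx$, that the genuine gate values propagate $C(\vx,s(\vz))$ into $v_{\mathrm{out}}$; had the gate variables instead been existential with full dependency $\{x_1,\ldots,x_n\}$ the equivalence would still hold, but the existential variable set would change, violating the second part of the statement. The remaining steps -- correctness of the Tseitin clauses and the De Morgan dualization -- are entirely routine.
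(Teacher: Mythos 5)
Your proof is correct and follows essentially the same route as the paper: Tseitin's transformation with the fresh gate variables quantified \emph{universally}, the matrix written as the implication (gate constraints) $\to$ (output gate variable) and then dualized into DNF, and the existential variables kept verbatim with their original dependency sets. The only difference is minor: the paper additionally introduces universal copies $v_i \leftrightarrow y_i$ of the existential variables so that every gate constraint mentions only universal variables, which yields the slightly stronger property that each DNF term contains at most one existential variable --- a property not demanded by the proposition, so your omission of that step is harmless.
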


The proof is by standard Tseitin's transformation~\cite{tseitin}.
As an example, consider the following DQBF.
\begin{align*}
 &  \forall x_1 \forall x_2\ \exists y_1(x_1) \exists y_2(x_2)
\ \neg \big(x_2 \vee (y_1\wedge x_1 \wedge y_2)\big)
\end{align*}
It is equisatisfiable with the following DQBF.
\begin{align*}
&  \forall x_1 \forall x_2\ \forall u_1\forall u_2\forall u_3\ \forall v_1\forall v_2\ \exists y_1(x_1) \exists y_2(x_2)
\\
& 
\left(
\begin{array}{l}
(v_1\leftrightarrow y_1)\wedge (v_2\leftrightarrow y_2)
\wedge 
(u_1\leftrightarrow v_1\wedge x_1\wedge v_2)
\\
\wedge
(u_2\leftrightarrow x_2\vee u_1)
\wedge
(u_3\leftrightarrow \neg u_2)
\end{array}
\right)
\to u_3
\end{align*}
Intuitively, we use the extra variable $v_1$ to represent the value $y_1$,
$v_2$ the value $y_2$, $u_1$ the value $y_1\wedge x_1\wedge y_2$,
$u_2$ the value $x_2 \vee (y_1\wedge x_1\wedge y_2)$
and $u_3$ the value $\neg (x_2\vee (y_1\wedge x_1\wedge y_2))$.
Note that the matrix can be easily rewritten into DNF.

\section{The $\nexpt$-completeness of $\satdqbf$}
\label{sec:succ-proj}

In this section we present two new proofs
that $\satdqbf$ is $\nexpt$-complete, originally proved in~\cite{PetersonR79}.

\begin{theorem}{\em\cite{PetersonR79}}
$\satdqbf$ is $\nexpt$-complete.
\end{theorem}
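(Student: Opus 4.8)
The plan is to show $\satdqbf \in \nexpt$ first, since this direction is conceptually easier. Given a DQBF $\Psi$ as in~(\ref{eq:dqbf}) with $n$ universal and $m$ existential variables, the natural certificate is the collection of Skolem functions $s_1,\ldots,s_m$. Each $s_i : \Sigma^{|\vz_i|}\to\Sigma$ can be written out explicitly as a truth table of size at most $2^n$, so the entire certificate has size $O(m\cdot 2^n)$, which is exponential in the size of $\Psi$. A nondeterministic machine first guesses these truth tables, then verifies that for every assignment $\va\in\Sigma^n$ to the universal variables the matrix $\psi$ evaluates to $1$ once each $y_i$ is replaced by $s_i(\vz_i\!\restriction\!\va)$. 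There are $2^n$ universal assignments and each evaluation of $\psi$ (a circuit, by the remark preceding Proposition~\ref{prop:dnf}) takes polynomial time, so the total verification runs in exponential time. Hence $\satdqbf\in\nexpt$.

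**Lower bound (NEXP-hardness), the main obstacle.** This is the hard part, and it is the content that the rest of Section~\ref{sec:succ-proj} is devoted to. I would fix an arbitrary language $L\in\nexpt$ and a nondeterministic Turing machine $M$ deciding it in time $2^{p(|w|)}$ for some polynomial $p$. On input $w$ of length $N$, an accepting computation of $M$ is a tableau of $2^{p(N)}\times 2^{p(N)}$ cells, indexed by a time coordinate and a tape coordinate, each of which needs only $p(N)$ bits. The key move is to encode the tableau not as an explicit object but \emph{implicitly} through boolean functions of polynomial arity: I would introduce existential variables whose Skolem functions, taking a $2p(N)$-bit cell address as argument, return the contents of that cell (tape symbol and head/state information). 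The universal variables range over all cell addresses, so a single universal quantifier block simultaneously ranges over all exponentially many cells of the tableau.

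**Enforcing correctness via a local matrix.** The core of the reduction is then to write a \emph{polynomial-size} matrix $\psi$ asserting that the guessed functions describe a genuine accepting run. This requires expressing the standard local consistency conditions: the first row encodes the initial configuration on input $w$; each window of adjacent cells in consecutive rows respects the transition relation of $M$; and some cell records an accepting state. The delicate point is that these conditions must be stated over the universally quantified addresses using only the Skolem-function values at $O(1)$ related addresses (a cell and its neighbours in time and space), so that $\psi$ remains polynomial while quantifying over exponentially many constraints. Implementing neighbour access means the matrix must be able to refer to the functions evaluated at addresses that differ from the universal address by a constant — arithmetic on the binary address — which is exactly the kind of bit-manipulation a polynomial-size circuit can perform, so the circuit form of $\psi$ (Proposition~\ref{prop:dnf}) is what makes this feasible. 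I expect this neighbour-addressing and the faithful encoding of the transition windows to be the main technical obstacle; everything else follows the familiar Cook–Levin pattern lifted one exponential level. Once $\psi$ is constructed in polynomial time from $w$, the resulting DQBF is satisfiable if and only if $M$ accepts $w$, establishing $\nexpt$-hardness and, together with membership, $\nexpt$-completeness.
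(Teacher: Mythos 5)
Your upper bound is fine and matches the paper's (deliberately brief) treatment of membership. The genuine gap is in the hardness reduction, at precisely the step you yourself flag as the main obstacle: neighbour access. You write that the matrix ``must be able to refer to the functions evaluated at addresses that differ from the universal address by a constant,'' and you propose to realize this by circuit arithmetic on the address bits. But in a DQBF of the form~(\ref{eq:dqbf}) the matrix has no mechanism for applying a Skolem function to a \emph{computed term}: the existential variable $y_i$ supplies the value $s_i(\vz_i)$ only at the current assignment of the universal variables. A polynomial-size circuit can compute the bit string $\va+1$ from $\va$, but there is no syntax by which $\psi$ can then obtain $s_i(\va+1)$ --- Skolem functions are not callable objects inside the matrix. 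So, as written, your matrix cannot compare a cell's content with that of its neighbour, and the reduction does not go through. Indeed, with a single block of universal address variables on which all existentials depend, your formula is just an ordinary QBF $\forall \vx\, \exists \vy\ \psi$, whose truth is decidable in PSPACE; a correct polynomial-time reduction of every $\nexpt$ language to such formulas would put $\nexpt$ inside PSPACE.

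The missing idea --- which is the crux of both of the paper's proofs, and is what the dependency sets of DQBF buy you over QBF --- is to quantify over \emph{two independent copies} of the address, $\forall \vx_1 \forall \vx_2$, give each copy its own block of existential variables ($\vy_1$ depending only on $\vx_1$, $\vy_2$ only on $\vx_2$), and add the consistency clause $\vx_1=\vx_2 \to \vy_1=\vy_2$, which forces the two blocks to be Skolemized by one and the same function $g$. All tableau conditions are then expressed as a \emph{pairwise} local property of the tuple $(\vx_1,g(\vx_1),\vx_2,g(\vx_2))$: initial configuration and acceptance when an address takes a boundary value, transition consistency when $\vx_2=\vx_1+1$ (a successor test on two \emph{independently quantified} addresses, which a small circuit can do), at most one head per time row when the time coordinates agree, and so on. The paper isolates this pairwise formulation as a \emph{succinct projection} (Theorem~\ref{theo:nexpt-succ-proj}), and Algorithm~1 then emits exactly the DQBF
\begin{align*}
\forall \vx_1 \forall \vx_2\ \exists y_{1,1}(\vx_1)\cdots \exists y_{1,m}(\vx_1)\ \exists y_{2,1}(\vx_2)\cdots \exists y_{2,m}(\vx_2)\quad C(\vx_1,\vy_1,\vx_2,\vy_2)\ \wedge\ \big(\vx_1=\vx_2 \to \vy_1=\vy_2\big).
\end{align*}
Your Cook--Levin-style tableau encoding is the right raw material, but without the duplication trick and the consistency clause it cannot be implemented in DQBF.
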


Note that the membership is straightforward.
So we will focus only on the hardness.

\subsection{The first proof: Reduction from succinct 3-colorability}
\label{subsec:first-proof}

The reduction is from the problem {\em graph 3-colorability}
where the input graphs are given in a succinct form~\cite{avi-succ-83}.
A (boolean) circuit $C(\vu,\vv)$, where $|\vu|=|\vv|=n$, represents a graph $G(C)=(V,E)$ where $V=\Sigma^n$
and $(\va,\vb)\in E$ iff $C(\va,\vb)=1$.
The problem {\em succinct 3-colorability} is defined as:
On input circuit $C$, decide if $G(C)$ is 3-colorable.
This problem is $\nexpt$-complete~\cite{yanna-succ-86}.

The reduction to $\satdqbf$ is as follows.
Let $C(\vu,\vv)$ be the input circuit, where $|\vu|=|\vv|=n$.
We represent a 3-coloring of $G(C)$ as a function $g:\Sigma^n\to \{01,10,11\}$
which can be encoded by the following DQBF.
\begin{align}
\nonumber
\Psi := & \forall \vx_1 \forall \vx_2 \ \exists y_{1}(\vx_1)\exists y_{2}(\vx_1)
\ \exists y_{3}(\vx_2)\exists y_{4}(\vx_2)
\\
&\hspace{1.25cm} \vx_1=\vx_2 \ \to\ (y_1,y_2)=(y_3,y_4)
\\
&\hspace{.75cm} \wedge\
(y_1,y_2)\neq (0,0)\ \wedge\ (y_3,y_4)\neq(0,0)
\\
&\hspace{.75cm} \wedge\
C(\vx_1,\vx_2)=1 \ \to \ (y_1,y_2)\neq (y_3,y_4) 
\end{align}
Intuitively, we use $y_1,y_2$ and $y_3,y_4$ to represent the first and the second bits of the image $g(\vx_1)$ and $g(\vx_2)$, respectively.
Lines (2) and (3) state that $(y_1,y_2)$ and $(y_3,y_4)$ must represent the same function from $\Sigma^n$ to $\Sigma^2$
and that their images do not inclue $00$.
Line (4) states that the colors of two adjacent vertices must be different.
Thus, $G(C)$ is 3-colorable iff $\Psi$ is satisfiable.

\subsection{The second proof: Reduction via succinct projections}
\label{subsec:second-proof}

Our second proof uses the notion of {\em succinct projection}.
We need some terminology.
Let $C(\vu_1,\vv_1,\vu_2,\vv_2)$ be a circuit with input gates $\vu_1,\vv_1,\vu_2,\vv_2$
where $|\vu_1|=|\vu_2|=n$ and $|\vv_1|=|\vv_2|=m$.
We say that a function $g:\Sigma^n\to\Sigma^m$ {\em agrees} with the circuit $C$,
if $C(w_1,g(w_1),w_2,g(w_2))=1$, for every $w_1,w_2\in \Sigma^n$.
In this case, we also say that the circuit $C$ {\em describes} the function $g$.
In the following whenever we say that a function $g:\Sigma^n\to\Sigma^m$ agrees with $C(\vu_1,\vv_1,\vu_2,\vv_2)$,
we implicitly assume that $n=|\vu_1|=|\vu_2|$ and $m=|\vv_1|=|\vv_2|$.

\begin{definition}
\label{def:succ-proj}
A {\em succinct projection} for a language $L$ is a polynomial time deterministic algorithm~$\cM$
such that on input $w\in \Sigma^*$, $\cM$ outputs a circuit $C$ such that
 $w\in L$ iff there is a function $g$ that agrees with $C$.
\end{definition}

Intuitively, we can view the function $g$ as the certificate for the membership of $w$ in $L$
and the circuit $C$ as the succinct description of $g$.
Since succinct projection runs in polynomial time,
the output circuit can only have polynomially many gates.
The following theorem is a new characterization of languages in $\nexpt$.

\begin{theorem}
\label{theo:nexpt-succ-proj}
A language $L\in \nexpt$ iff it has a succinct projection.
\end{theorem}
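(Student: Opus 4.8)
The plan is to prove both directions of the equivalence $L \in \nexpt \iff L$ has a succinct projection.

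\medskip

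\textbf{The easy direction ($\Leftarrow$).} First I would show that if $L$ has a succinct projection $\cM$, then $L \in \nexpt$. Given input $w$, run $\cM$ in polynomial time to obtain the circuit $C(\vu_1,\vv_1,\vu_2,\vv_2)$ with $|\vu_1|=|\vu_2|=n$ and $|\vv_1|=|\vv_2|=m$, both bounded by $\mathrm{poly}(|w|)$. By definition, $w \in L$ iff there is a function $g:\Sigma^n\to\Sigma^m$ agreeing with $C$. Such a $g$ is specified by its truth table, which has $2^n \cdot m$ bits, i.e.\ of size exponential in $|w|$. The nondeterministic exponential-time machine guesses this table and then verifies, for all $2^{2n}$ pairs $(w_1,w_2)\in\Sigma^n\times\Sigma^n$, that $C(w_1,g(w_1),w_2,g(w_2))=1$. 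Each check evaluates a polynomial-size circuit, so the total verification runs in exponential time. Hence $L\in\nexpt$.

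\medskip

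\textbf{The hard direction ($\Rightarrow$).} This is the crux. Suppose $L\in\nexpt$, so there is a nondeterministic Turing machine $N$ deciding $L$ in time $2^{p(|w|)}$ for some polynomial $p$. The standard Cook--Levin style idea is to encode an accepting computation of $N$ on $w$ as a tableau of $2^{p(|w|)}\times 2^{p(|w|)}$ cells, and to express acceptance by local consistency constraints between adjacent cells. The key observation (flagged in the introduction) is to represent the tableau not as an explicit string but as a \emph{function} $g:\Sigma^n\to\Sigma^m$, where the $n=O(p(|w|))$ input bits index a cell (e.g.\ by its time-step and tape-position coordinates, each written in binary) and the $m$ output bits encode that cell's content. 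The main work is to build, in polynomial time, a circuit $C(\vu_1,\vv_1,\vu_2,\vv_2)$ that accepts exactly when $g$ encodes a valid accepting run: it must check the initial configuration (the first row spells out $w$ together with blanks), the transition-rule consistency between each cell and the relevant cells in the preceding row, and that an accepting state eventually appears. Because I can feed $C$ two independently-indexed cells $(\vu_1,\vv_1)$ and $(\vu_2,\vv_2)$ at once, I would use $\vu_1,\vu_2$ to name a cell and one of its ``parent'' cells, and let $C$ verify the local window constraint relating them, while also using the universally-quantified-over pair to enforce that $g$ is applied consistently.

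\medskip

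\textbf{The main obstacle.} The delicate point is that the ``agreement'' condition only gives me pairwise access to $g$: the circuit sees $g(w_1)$ and $g(w_2)$ for two cell-indices at a time, and it must hold for \emph{all} such pairs simultaneously. I need to design the circuit so that ranging over all pairs $(w_1,w_2)$ captures every local window needed by the transition relation, while also guaranteeing consistency of the initial and accepting conditions. The trick is to encode enough structural information into the index bits $\vu$ so that a single cell-index determines not only its own coordinates but also the coordinates of the neighbors it must be compared against; then for each index $\vu_1$ I choose the companion index $\vu_2$ to be the neighbor, and $C$ checks the transition rule on the decoded contents $\vv_1,\vv_2$. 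Handling boundary cells, the initial row, and the accepting-state requirement requires the circuit to branch on whether $\vu_1$ falls on a boundary, which is routine once the indexing scheme is fixed. I would also need to argue the converse: that any $g$ agreeing with $C$ genuinely decodes to an accepting run, which follows because $C$ encodes exactly the local-consistency characterization of valid tableaux. Assembling $C$ is a polynomial-time deterministic computation since $N$'s transition function and $|w|$ are fixed inputs, so the construction is itself a succinct projection for $L$.
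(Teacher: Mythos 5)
Your (if) direction matches the paper's and is fine. In the (only if) direction you correctly identify the crux---the circuit sees $g$ at only two points at a time---but the fix you offer does not address it. Encoding ``structural information into the index bits so that a cell-index determines its neighbors' coordinates'' buys nothing: the index is just the coordinate pair $(i,j)$, and the neighbors' coordinates are computable from it by $\pm 1$ arithmetic anyway. The real obstacle is that pairwise consistency is strictly weaker than the standard $2\times 3$-window consistency of Cook--Levin when the machine is \emph{nondeterministic}: a single transition affects several cells of row $j+1$ (the symbol written at the head cell, and the head's arrival at a neighboring cell), and separate pairwise checks can each be satisfied by \emph{different} transitions. Concretely, with transitions $(q,a)\to(q_1,b,R)$ and $(q,a)\to(q_2,c,R)$ and the head at cell $(i,j)$ in state $q$ reading $a$, a bogus $g$ can put symbol $c$ (no head) at $(i,j+1)$ and head state $q_1$ at $(i+1,j+1)$: the pair $\big((i,j),(i,j+1)\big)$ is consistent with the second transition, the pair $\big((i,j),(i+1,j+1)\big)$ with the first, and no other pair connects the two choices, so every check you describe passes even though no single run produces this tableau. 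Your closing claim that soundness ``follows because $C$ encodes exactly the local-consistency characterization of valid tableaux'' assumes precisely the equivalence that fails here. The paper's appendix resolves this by first normalizing $M$ so that every nondeterministic move keeps the head stationary: then a nondeterministic choice affects only the one cell under the head, all head-moving transitions are deterministic, and the pairwise checks do compose correctly.

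A second, related gap: you only ever compare a cell with its parents, whereas the agreement condition quantifies over \emph{arbitrary} pairs, and the paper exploits this with a genuinely non-local constraint---its condition (d): if $j=j'$ then at most one of $g(i,j)$, $g(i',j')$ carries a head marker. This is needed because adjacent-pair checks alone cannot forbid a head from spuriously \emph{appearing} at a cell none of whose parents carries a head: each individual parent-pair check must be permissive (the head might legitimately be arriving from the other side), so a phantom head can materialize and drive a fake computation into the accepting state. Condition (d), together with the forced continuation of the genuine head, rules this out. So as written your construction is unsound; to repair it you need both the head-stay normalization (or, alternatively, an enriched cell alphabet recording the chosen transition and the direction of head arrival) and some mechanism---such as the paper's global same-row condition---excluding phantom heads.
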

\begin{proof}
(if) Suppose that $L$ has a succinct projection.
Consider the following algorithm.
On input $w$, first use the succinct projection to construct the circuit $C$.
Then, guess a function $g$ (of exponential size)
and verify that it agrees with $C$.
It is obvious that it runs in non-deterministic exponential time.
That it is correct follows from the definition of succinct projection.

(only if) It is essentially the Cook-Levin reduction disguised 
in the form of function certificates. 
We only sketch it here. 
Let $L\in \nexpt$ and $M$ be a 1-tape NTM that accepts $L$ in time $2^{p(n)}$
for some polynomial $p(n)$.
For a word $w\in L$ of length $n$, its accepting run can be represented
as a function $g:\Sigma^{p(n)}\times \Sigma^{p(n)}\to \Sigma^{\ell}$,
where $g(i,j)$ denotes the content of cell $i$ in time $j$.
The tuples in the codomain $\Sigma^{\ell}$ encode the states and the tape symbols of~$M$.
To verify that $g$ represents an accepting run,
it is sufficient to verify that for every $i_1,j_1,i_2,j_2 \in \Sigma^{p(n)}$,
the tuple $(i_1,j_1,g(i_1,j_1),i_2,j_2,g(i_2,j_2))$ satisfies a certain property $P$
which depends only on the input word $w$ and the transitions of $M$.
The desired succinct projection constructs in polynomial time a circuit $C$
describing this property $P$.
\end{proof}

\noindent
{\em The second proof of the $\nexpt$-hardness of $\satdqbf$:}
Let $L \in \nexpt$.
The polynomial time (Karp) reduction from $L$ to $\satdqbf$ 
is described as Algorithm~1 below.

\noindent
\begin{center}
\begin{tabular}{|p{0.46\textwidth}|}
 \hline
\multicolumn{1}{|c|}{\bf Algorithm~1: Reducing $L\in \nexpt$ to $\satdqbf$}
\\ 
\hline
\hline
{\bf Input:} $w\in \Sigma^*$.
\\
{\footnotesize 1:} Run the succinct projection of $L$ on $w$.
\\
{\footnotesize 2:} Let $C(\vx_1,\vy_1,\vx_2,\vy_2)$ be the output circuit where
\\
\hspace{0.25cm}
$|\vx_1|=|\vx_2|=n$, $|\vy_1|=|\vy_2|=m$, $\vy_1=(y_{1,1},\ldots,y_{1,m})$
\\
\hspace{0.25cm}
 and $\vy_2=(y_{2,1},\ldots,y_{2,m})$.
\\
{\footnotesize 3:} Output the following DQBF $\Psi$:
\\
\hspace{0.45cm}$\forall \vx_1 \forall \vx_2 
\ \exists y_{1,1}(\vx_1)\cdots \exists y_{1,m} (\vx_1)
\ \exists y_{2,1}(\vx_2)\cdots \exists y_{2,m} (\vx_2)$
\\
$\qquad\;\;\; \ C(\vx_1,\vy_1,\vx_2,\vy_2)\ \wedge \ \big(\vx_1= \vx_2 \to \vy_1= \vy_2\big)$
\\
\hline
\end{tabular}
\end{center}

We show $w\in L$ iff $\Psi$ is satisfiable.
Suppose $w\in L$.
Let $g:\Sigma^n\to\Sigma^m$ be a function that agrees with $C$.
For each $1\leq i\leq m$, define the Skolem function $s_i:\Sigma^n\to\Sigma$ 
where $s_i(\va)$ is the $i$-th component of $g(\va)$, for every $\va\in \Sigma^n$.
It is routine to verify that $\Psi$ is satisfiable 
with each $s_i$ being the Skolem function for $y_{1,i}$ and $y_{2,i}$.

Conversely, suppose $\Psi$ is satisfiable.
Let $s_{j,i}:\Sigma^n\to\Sigma$ be the Skolem function for $y_{j,i}$,
where $1\leq j\leq 2$ and $1\leq i\leq m$.
Since $\vx_1= \vx_2\ \to\ \vy_1=\vy_2$,
the functions $s_{1,i}$ and $s_{2,i}$ must be the same, for every $1\leq i \leq m$.
Define $g:\Sigma^n\to\Sigma^m$ where $g(\va)=(s_1(\va),\ldots,s_{1,m}(\va))$
for every $\va\in\Sigma^n$.
Since $C(\va_1,g(\va_1),\va_2,g(\va_2))$ is true for every $\va_1,\va_2$,
the function $g$ agrees with $C$.
That is, there is a function that agrees with $C$.
Hence, $w\in L$.
This completes the second proof.

\begin{remark}
\label{rem:succ-proj-np}
Observe that when Theorem~\ref{theo:nexpt-succ-proj} is applied to languages in $\npt$,
the accepting run of a non-deterministic Turing machine with polynomial run time $p(n)$
is represented as a function $g:\Sigma^{\log p(n)}\times\Sigma^{\log p(n)} \to \Sigma^{\ell}$
and the succinct projection outputs a circuit $C(\vx_1,\vy_1,\vx_2,\vy_2)$
where $|\vx_1|=|\vx_2|=\log p(n)$ and $|\vy_1|=|\vy_2|=\ell$.
Thus, for $L \in \npt$,
the DQBF output by Algorithm~1 has $4\log p(n)$ universal variables and $2\ell$ existential variables.
\end{remark}


\section{Some concrete reductions}
\label{sec:reduction}

In this section we show how to utilize succinct projection
to obtain the reductions from concrete $\nexpt$-complete problems to $\satdqbf$.
These are (succinct) {\em Hamiltonian cycle}, {\em set packing} and {\em subset sum}~\cite{yanna-succ-86}.
We use the notion of succinctness from~\cite{avi-succ-83}
which has been explained in Sect.~\ref{subsec:first-proof}.
By Algorithm~1, it suffices to present only the succinct projections.

\paragraph*{Some useful notations}
For an integer $k\geq 1$, $[k]$ denotes the set $\{0,\ldots,k-1\}$.
For $i \in [2^n]$, $\bin_n(i)$ is the binary representation of $i$ in $n$ bits.
The number represented by $\va\in \Sigma^n$ is denoted by $\num(\va)$.
For $\va,\vb\in \Sigma^n$,
if $\num(\va)=\num(\vb)+1 \pmod {2^n}$,
we say that {\em $\va$ is the successor of $\vb$}, denoted by $\va=\vb+1$.
Note that successor is applied only on two strings with the same length
and the successor of $1^n$ is $0^n$.
It is not difficult to construct a circuit $C(\vx,\vy)$ (in time polynomial in $|\vx|+|\vy|$)
such that $C(\va,\vb)=1$ iff $\va=\vb+1$.

\paragraph*{Reduction from succinct Hamiltonian cycle}
Succinct Hamiltonian cycle is defined as follows.
The input is a circuit $C(\vu,\vv)$.
The task is to decide if there is a Hamiltonian cycle in $G(C)$.

Let $C(\vu,\vv)$ be the input circuit where $|\vu|=|\vv|=n$.
We use a function $g:\Sigma^n\to\Sigma^n$
to represent a Hamiltonian cycle $(\vb_0,\ldots,\vb_{2^n-1})$ where $g(\bin_n(i))=\vb_i$, for every $i\in [2^n]$.
To correctly represent a Hamiltonian cycle,
the following must hold for every $\va_1,\va_2 \in \Sigma^n$.
\begin{enumerate}[(H1)]
\item
If $\va_1\neq \va_2$, then $g(\va_1)\neq g(\va_2)$.
\item
If $\va_2=\va_1+1$, then $(g(\va_1),g(\va_2))$ is an edge in $G(C)$.
\end{enumerate}
The succinct projection for succinct Hamiltonian cycle simply outputs the 
circuit that expresses (H1) and (H2), i.e.,
it outputs the following circuit $D(\vx_1,\vy_1,\vx_2,\vy_2)$ where $|\vx_1|=|\vx_2|=|\vy_1|=|\vy_2|=n$:
\begin{align*}
\big(\vx_1\neq \vx_2  \to  \vy_1\neq \vy_2\big)
& \wedge 
\big(\vx_2=\vx_1+1  \to  C(\vy_1,\vy_2)=1\big)
\end{align*}
Obviously, a function $g:\Sigma^n\to\Sigma^n$ represents a hamiltonian cycle in $G(C)$ iff it agrees with $D$.

\paragraph*{Reduction from succinct set packing}
In the standard representation the problem {\em set packing} is defined as follows.
The input is a collection $\cK$ of finite sets $S_1,\ldots,S_{\ell}\subseteq \Sigma^m$ and an integer $k$.
The task is to decide whether $\cK$ contains $k$ mutually disjoint sets.
We assume each $S_i$ has a ``name'' which is a string in $\Sigma^{\log \ell}$.

The succinct representation of the sets $S_1,\ldots,S_{\ell}$ is a circuit $C(\vu,\vv)$ where $|\vu|=m$ and $|\vv|=\log \ell$.
A string $\va\in \Sigma^m$ is in the set $S_{\vb}$, if $C(\va,\vb)=1$.
We denote by $\cK(C)$ the collection of finite sets defined by the circuit $C$.
The problem {\em succinct set packing} is defined analogously where the input is the circuit $C(\vu,\vv)$ and an integer $k$ (in binary).

We now describe its succinct projection.
Let $C(\vu,\vv)$ and $k$ be the input where $|\vu|=m$ and $|\vv|=n$.
We first assume that $k$ is a power of $2$.
We represent $k$ disjoint sets $S_1,\ldots,S_k$ in $\cK(C)$ as a function $g:\Sigma^{\log k}\times \Sigma^{m}\to \Sigma^n$
where $g(\bin(i),\va)$ is the name of the set $S_i$.
Note that the string $\va$ is actually ignored in the definition of $g$.

For a function $g:\Sigma^{\log k}\times \Sigma^{m}\to \Sigma^n$ to correctly represent $k$ disjoint sets,
the following must hold for every $(\va_1,\vb_1),(\va_2,\vb_2) \in \Sigma^{\log k}\times \Sigma^{m}$.
\begin{enumerate}[(P1)]
\item
If $\va_1=\va_2$, then $g(\va_1,\vb_1)=g(\va_2,\vb_2)$.
That is, the function $g$ does not depend on $\vb_1$ and $\vb_2$.
\item
If $\va_1\neq \va_2$ and $\vb_1=\vb_2$, then $C(\vb_1,g(\va_1,\vb_1))=0$ or $C(\vb_1,g(\va_2,\vb_2))=0$.
That is, the element $\vb_1$ is not in the sets whose names are $g(\va_1,\vb_1)$ and $g(\va_2,\vb_2)$.
\end{enumerate}
It is routine to verify that $g$ represents $k$ disjoint sets iff 
(P1) and (P2) hold for every $(\va_1,\vb_1),(\va_2,\vb_2) \in \Sigma^{\log k}\times \Sigma^{m}$.
The succinct projection outputs the following circuit $D$ that formalizes (P1) and (P2):
\begin{align*}
&  \big(\vx_1 = \vx_2  \to  \vz_1=\vz_2\big)
\\
\wedge &
\big(\vx_1\neq \vx_2 \wedge  \vy_1=\vy_2\big)  \to 
\neg \big(C(\vy_1,\vz_1)= C(\vy_1,\vz_2)=1\big)
\end{align*}
If $k$ is not a power of $2$,
we conjunct both atoms $\vx_1=\vx_2$ and $\vx_1\neq\vx_2$ with a circuit that tests whether 
the numbers represented by the bits $\vx_1$ and $\vx_2$ is an integer in $[k]$.
Such a circuit can be easily constructed in polynomial time in $\lceil\log k\rceil$.

\paragraph*{Reduction from succinct subset-sum}
In the standard representation the instance of subset-sum is a list of positive integers 
$s_0,\ldots,s_{k-1}$ and $t$ (all written in binary).
The task is to decide if there is a subset $X\subseteq [k]$ such that $\sum_{i \in X} s_i = t$.
Such $X$ is called the subset-sum solution.
The succinct representation is defined as 
two circuits $C_1(\vu_1,\vv)$ and $C_2(\vu_2)$, where $|\vu_1|=\max_{i\in [k]}\log s_i$,
$|\vv|=\log k$ and $|\vu_2|=\log t$.
Circuit $C_1$ defines the numbers $s_i$'s where $C_1(\va,\vb)$ is the $i$-th least significant bit of $s_j$,
where $i=\num(\va)$ and $j=\num(\vb)$.
Circuit $C_2$ defines the number $t$ where $C_2(\va)$ is the $i$-th least significant bit of $t$,
where $i=\num(\va)$.
The subset-sum instance represented by $C_1$ and $C_2$ is denoted by $\cN(C_1,C_2)$.
We will describe the succinct projection for succinct subset-sum.

Let $C_1(\vu_1,\vv)$ and $C_2(\vu_2)$ be the input where $|\vu_1|=|\vu_2|=n$ and $|\vv|=m$.
We need a few notations.
Let $s_0,\ldots,s_{2^m-1}$ be the numbers represented by $C_1$ and $t$ the number represented by $C_2$.
For a set $X\subseteq [2^m]$, let $T_X = \sum_{i\in X} s_i$.
For $0\leq j \leq 2^m$, let $T_{X,j} = T_{X\cap [j]}$.
Abusing the notation, for $\vb\in\Sigma^m$,
we write $s_{\vb}$ and $T_{X,\vb}$ to denote $s_i$ and $T_{X,i}$, respectively, where $i=\num(\vb)$.
For $\va\in \Sigma^n$, bit-$\va$ means bit-$i$ where $i=\num(\va)$.

We represent a set $X\subseteq [2^m]$ as a function $g:\Sigma^n\times\Sigma^m\to\Sigma^5$
where $g(\va,\vb)=(\alpha,\beta,\gamma,\delta,\epsilon)$ such that:
\begin{itemize}
\item
$\alpha=1$ iff $s_{\vb}\in X$.
\item
$\beta$ is bit-$\va$ in $T_{X,\vb}$.
\item
$\gamma$ is the carry of adding $T_{X,\vb}$ and $s_{\vb}$ up to bit-$(\va-1)$.
\item
$\delta\epsilon=\beta+\gamma+C(\va,\vb)$, i.e., $\epsilon$ is the least significant bit of $\beta+\gamma+C(\va,\vb)$
and $\delta$ is the carry.
\end{itemize}
See the illustration below. 

\begin{center}
\begin{tikzpicture}[scale=0.8, every node/.style={scale=0.8}]


\node at (-6.2,1.5) (TXw) {\small$T_{X,\vb}:$};
\draw (-5.3,1.5) -- node[above,yshift=-0.1cm] {\small bit-$0$ to bit-$(\va-1)$ in $T_{X,\vb}$} (0,1.5);
\draw (-5.3,1.4) -- (-5.3,1.6);
\draw (0,1.4) -- (0,1.6);
\node at (1,1.5) (b1) {\small $\beta$};
\node at (2.5,1.5) (x) {\small = bit-$\va$ in $T_{X,\vb}$};

\node at (-6,0) (sw) {\small$s_{\vb}:$};
\draw (-5.3,0) -- node[above,yshift=-0.1cm] {\small bit-$0$ to bit-$(\va-1)$ in $s_{\vb}$} (0,0);
\draw (-5.3,-0.1) -- (-5.3,0.1);
\draw (0,-0.1) -- (0,0.1);
\node at (-.5,0.3) (c) {\small$\gamma$};
\node at (1,0) (Cww) {\small$C(\va,\vb)$};
\node at (2.5,0.3) (d) {\small$\delta$};
\node at (1,-1) (e) {\small$\epsilon$};
\draw [->] (c) to[bend left=20] (Cww);
\draw [->] (Cww) to[bend left=20] (d);
\draw [->] (b1) to (Cww);
\draw [->] (Cww) to (e);

\end{tikzpicture}
\end{center}
\vspace{-0.2cm}
Intuitively, $g(\va,\vb)$ contains the information about 
the additions performed on bit-$\va$ in $s_{\vb}$ (with respect to the set $X$).
In particular, the bits of the number $T_{X}$ are all contained in $g(\va,1^m)$ for every $\va\in \Sigma^n$.
These bits can then be compared to those in $t$ by means of the circuit $C_2$.

Note that for a function $g:\Sigma^n\times\Sigma^m\to\Sigma^5$
to properly represent a number $T_X$, for some $X\subseteq [2^m]$,
it suffices to check the values of $g$ on ``neighbouring'' points in $\Sigma^n\times\Sigma^m$.
More precisely, the following conditions must be satisfied for every $(\va_1,\vb_1),(\va_2,\vb_2)\in \Sigma^n\times\Sigma^m$,
where $g(\va_1,\vb_1)=(\alpha_1,\beta_1,\gamma_1,\delta_1,\epsilon_1)$ and 
$g(\va_2,\vb_2)=(\alpha_2,\beta_2,\gamma_2,\delta_2,\epsilon_2)$.
\begin{enumerate}[(i)]
\item
If $\vb_1=\vb_2$, then $\alpha_1=\alpha_2$.
That is, the value $\alpha_1$ depends only on the index of a number.
\item
If $\alpha_1=0$, then $\gamma_1=\delta_1=0$ and $\beta_1=\epsilon_1$.
\item
If $\alpha_1=1$, then $\gamma_1+C(\va_1,\vb_1)+\beta_1=\delta_1\epsilon_1$.
\item
If $\va_1=0^n$, then $\gamma_1=0$.
\item
If $\va_1=1^n$, then $\delta_1=0$.
\item
If $\vb_1=0^m$, then $\beta_1=\gamma_1=0$.
\item
If $\vb_1=1^m$, then $\epsilon_1=C_2(\va_1)$.
\item
If $\alpha_1=1$ and $\vb_1=\vb_2$ and $\va_2=\va_1+1$, then $\delta_1=\gamma_2$.
\item
If $\alpha_1=1$ and $\vb_2=\vb_1+1$ and $\va_2=\va_1$, then $\epsilon_1=\beta_2$.
\end{enumerate}
Intuitively, (ii) and (iii) state that the values of $(\alpha_1,\beta_1,\gamma_1,\delta_1,\epsilon_1)$
must have their intended meaning, i.e., when $\alpha_1=0$, no addition is performed
and when $\alpha_1=1$, the addition $\gamma_1+C(\va_1,\vb_1)+\beta_1$ is performed and the result is $\delta_1\epsilon_1$.
(iv) states that there is no carry from the previous bit when considering the least significant bit.
(v) states that there shouldn't be any carry after adding the most significant bit (if we want $T_X$ equals $t$).
(vi) states that $T_{X,0}$ must be zero.
(vii) states that bit-$\va$ in $T_{X}$ must equal to bit-$\va$ in $t$.
Finally, (viii) and (ix) state that 
when $(\va_1,\vb_1)$ and $(\va_2,\vb_2)$ are neighbors,
the bits $\beta_1,\gamma_1,\delta_1,\epsilon_1$ and $\beta_2,\gamma_2,\delta_2,\epsilon_2$ 
 must obey their intended meaning. 

Obviously, if $g$ satisfies (i)--(ix),
then it represents a set $X$ such that $T_X=t$.
Conversely, if there is a set $X$ such that $T_X=t$,
then there is a function $g$ that satisfies (i)--(ix).
It is not difficult to design a succinct projection
that constructs a circuit $D$ that describes functions that satisfy (i)-(ix).


\section{Reductions from other $\nexpt$-complete logics}
\label{sec:logic}

In this section we will consider the following fragments of relational first-order logic (with the equality predicate):
\begin{itemize}
\item
The {\em Bernays-Sch\"onfinkel-Ramsey} (BSR) class:
The class of relational $\fo$ sentences of the form:
\begin{align*}
\Psi_1 & := \exists x_1\cdots \exists x_m \ \forall y_1\cdots \forall y_n \ \psi
\end{align*}
where $\psi$ is a quantifier-free formula.
\item
The two-variable logic ($\fotwo$): 
The class of relational $\fo$ sentences using only two variables $x$ and $y$.

The classic result by Scott~\cite{scott} states
that every $\fotwo$ sentence can be transformed in linear time into an equisatisfiable $\fotwo$ sentence
of the form:
\begin{eqnarray*}
\Psi_2 := \forall x \forall y\ \alpha(x,y) \ \wedge \ \bigwedge_{i=1}^m \forall x \exists y \beta_i(x,y)
\end{eqnarray*}
for some $m\geq 1$, where $\alpha(x,y)$ and each $\beta_i(x,y)$ are quantifier free formulas.
\item
The {\em L\"owenheim/monadic} class: 
The class of relational $\fo$ sentences using only unary predicate symbols.
Sentences in this class are also known as monadic sentences.
\end{itemize}
We denote by $\satbsr$, $\satmon$ and $\satfotwo$ 
the corresponding satisfiability problem for each class
and it is well known that all of them are $\nexpt$-complete~\cite{Lewis80,Furer83,gkv,bgg97,llt21}.
The upper bound is usually established by the so called {\em Exponential Size Model} (ESM) property
stated as follows.
\begin{itemize}
\item
If the $\bsr$ sentence $\Psi_1$ is satisfiable,
then it is satisfiable by a model with size at most $m+1$~\cite[Prop.~6.2.17]{bgg97}.
\item
If the $\fotwo$ sentence $\Psi_2$ is satisfiable, 
then it is satisfiable by a model with size $m2^n$,
where $n$ is the number of unary predicates used~\cite{gkv}.
\item
If a L\"owenheim sentence is satisfiable,
then it is satisfiable by a model with size at most $r2^n$,
where $r$ is the quantifier rank and $n$ is the number of unary predicates~\cite[Prop.~6.2.1]{bgg97}.
\end{itemize}

The main idea of the reduction is quite simple.
We will represent the domain of a model with size at most $2^t$
as a subset of $\Sigma^t$
and use a function $f_0:\Sigma^t\to\Sigma$ as the indicator 
whether an element is in the domain.
Every predicate in the input formula 
can be represented as a function $f:\Sigma^{kt}\to\Sigma$ where $k$ is the arity of the predicate.
All these functions can then be encoded appropriately as existential variables in DQBF.
Note that the universal $\fo$ quantifier $\forall x \cdots$ can be encoded as $\forall \vu \ f_0(\vu) \to \cdots$.
The existential $\fo$ quantifier can first be Skolemized which
can then be encoded as existential variables in DQBF.

The rest of this section is organized as follows.
For technical convenience, we first introduce the logic Existential Second-order Quantified Boolean Formula ($\esb$)
-- an alternative, but equivalent formalism of DQBF.
The only difference between $\esb$ and DQBF is the syntax in declaring the function symbol.
Then, we consider the problem that we call {\em Bounded $\fo$ satisfiability},
denoted by $\bsatfo$, which subsumes all $\satbsr$, $\satfotwo$ and $\satmon$
and show how to reduce it to $\satdqbf$.

\paragraph*{The logic $\esb$}

The class $\esb$ is the extension of QBF formulas extended with existential second-order quantifiers.
That is, $\esb$ consists of formulas of the form:
\begin{align*}
\Psi & := 
\exists f_1 \exists f_2 \cdots \exists f_p\ 
Q_{1} v_1 \cdots \ Q_{n} v_{n} \quad \psi
\end{align*}
where each $Q_i \in \{\forall,\exists\}$ and each $f_i$ is a boolean function symbol associated with a fixed arity $\ar(f_i)$.
The formula $\psi$ is a boolean formula using the variables $v_i$'s and $f(\vz)$'s,
where $f\in \{f_1,\ldots,f_p\}$, $|\vz|=\ar(f)$ and $\vz \subseteq \{v_1,\ldots,v_q\}$.
We call each $f(\vz)$ in $\psi$ a {\em function variable}.

The semantics of $\Psi$ is defined naturally.
We say that $\Psi$ is satisfiable, 
if there is an interpretation $F_i:\Sigma^{\ar(f_i)}\to \Sigma$ for each $f_i$
such that $Q_{1} v_1 \cdots \ Q_{n} v_{n} \ \psi$ is a true QBF.
In this case we say that {\em $F_1,\ldots,F_p$ make $\Psi$ true}.
It is not difficult to see that DQBF and $\esb$ can be transformed to each other in linear time 
while preserving satisfiability.


\paragraph*{Bounded $\fo$ satisfiability ($\bsatfo$)}
The problem $\bsatfo$ is defined as:
On input relational $\fo$ sentence $\varphi$
and a positive integer $N$ (in binary),
decide if $\varphi$ has a model with cardinality at most $N$.
It is a folklore that $\bsatfo$ is $\nexpt$-complete.
Note that due to the ESM property,
it is trivial that $\bsatfo$ subsumes 
all of $\satbsr$, $\satfotwo$ and $\satmon$.


\paragraph*{Reduction from $\bsatfo$ to $\satesb$}
Let $\varphi$ and $N$ be the input to $\bsatfo$.
We may assume that $\varphi$ is in the Prenex normal form:
$\varphi := 
Q_1x_1 \cdots Q_n x_n \ \psi$,
where each $Q_i \in \{\forall,\exists\}$
and $\psi$ is quantifier-free formula.
Adding redundant quantifier, if necessary, we may assume that $Q_1$ is $\forall$.
Then, we Skolemize each existential quantifier as follows.
Let $i$ be the minimal index where $Q_i=\exists$.
We rewrite $\varphi$ into:
\begin{align*}
\varphi' & := 
\forall x_1 \cdots \forall x_{i-1}\ Q_{i+1}x_{i+1}\cdots Q_n x_n\ \forall z 
\\ 
& \qquad\qquad z = g(x_1,\ldots,x_{i-1})\ \to \ \psi'
\end{align*}
where $z$ is a fresh variable, $g$ is the Skolem function representing the existentially quantified variable $x_i$
and $\psi'$ is obtained from $\psi$ by replacing every occurrence of $x_i$ with $z$.
Hence,
we may assume that the input sentence $\varphi$ is of form:
\begin{align}
\label{eq:bsatfo}
\varphi & := 
\forall x_1 \cdots \forall x_n \ \psi
\end{align}
where $\psi$ is quantifier-free formula where 
every (Skolem) function symbol $g(x_1,\ldots,x_{i-1})$
only occur in the equality predicate $z = g(x_1,\ldots,x_{i-1})$
and $z$ is one of $x_{i},\ldots,x_n$.

In the following let $g_1,\ldots,g_k$ be the Skolem function symbols in $\psi$
and $P_1,\ldots,P_{\ell}$ be the predicates in $\psi$.
Let $\ar(g_i)$ and $\ar(P_i)$ denote the arity of $g_i$ and $P_i$.
Let $t=\lceil \log N \rceil$.
Construct the following $\esb$ formula:
\begin{align}
\label{eq:esb-logic}
\nonumber
\Phi & := \exists f_0 \ \exists f_{1,1}\cdots \exists f_{1,t} \cdots 
\exists f_{k,1}\cdots \exists f_{k,t}
\ \exists f_{P_1} \cdots \exists f_{P_{\ell}}
\\
& \qquad\qquad \forall \vu_1 \cdots \forall \vu_n \ 
\left(
\begin{array}{l}
\vu_1 = 0^t \to f_0(\vu_1) 
\\
\wedge \ 
\bigwedge_{i=1}^n f_0(\vu_i) \ \to \ \Psi
\end{array}
\right)
\end{align}
where:
\begin{itemize}
\item
The arity of $f_0$ is $t$.
\item 
For every $1\leq i \leq k$,
the arity of $f_{1,1},\ldots,f_{1,t}$ is $t\cdot \ar(g_i)$.
\item 
For every $1\leq i \leq \ell$,
the arity of $f_{P_1},\ldots,f_{P_{\ell}}$ is $t\cdot \ar(P_i)$.
\item 
For every $1\leq i \leq n$, $|\vu_i|=t$.
\end{itemize}
The formula $\Psi$ is obtained from $\psi$ as follows.
\begin{itemize}
\item
Each predicate $P_i(x_{j_1},\ldots,x_{j_m})$ 
is replaced with $f_{P_i}(\vu_{j_1},\ldots,\vu_{j_m})$.
\item 
Each predicate $x_j = g_i(x_{j_1},\ldots,x_{j_m})$
is replaced with $\vu_j = (f_{i,1}(\vu_{j_1},\ldots,\vu_{j_m}), \ldots,f_{i,t}(\vu_{j_1},\ldots,\vu_{j_m}))$
\item 
Each predicate $x_j = x_i$ is replaced with
$\vu_j = \vu_i$.
\end{itemize}
Intuitively, we use $f_0$ as the indicator to determine whether a string in $\Sigma^t$
is an element in the model.
To ensure that the model is not empty, we insist that $0^t$ belongs to the model,
hence, the formula $\vu_1 = 0^t \to f_0(\vu_1)$.
We use the vector of variables $\vu_i$ to represent $x_i$.
For every $1\leq i \leq k$,
the functions $f_{i,1},\ldots,f_{i,t}$ 
represent the bit representation of $g_i(x_{j_1},\ldots,x_{j_m})$.
Finally, for every $1\leq i \leq \ell$,
the function $f_{P_i}$ represents the predicate $P_i$.
Note the part $\bigwedge_{i=1}^n f_0(\vu_i) \ \to \ \Psi$
which means we require $\Psi$ holds only on the vectors $\vu_1,\ldots,\vu_n$ that 
``passes'' the function $f_0$, i.e., 
they are elements of the model.
It is routine to verify that the formula $\varphi$ in Eq.~(\ref{eq:bsatfo})
is satisfiable by a model with cardinality at most $N$
iff the $\esb$ formula $\Phi$ in Eq.~(\ref{eq:esb-logic}) is satisfiable.

\section*{Acknowledgement}

We are very grateful to Jie-Hong Roland Jiang for many fruitful discussions on the preliminary drafts of this work. 
We also thank the anonymous reviewers for their constructive comments.
We acknowledge the generous financial support of 
Taiwan Ministry of Science and Technology under grant no.~109-2221-E-002-143-MY3.

\bibliographystyle{IEEEtran}
\bibliography{aa-bib-succ-rep}

\newpage
\onecolumn
\appendix
%
%

\subsection{Proof of Proposition~\ref{prop:dnf}}

The proof essentially uses the same idea as Tseitin's transformation for the quantifier free boolean formulas~\cite{tseitin}.
Let $\Psi$ be a DQBF in the form (\ref{eq:dqbf}) where $\psi$ is in circuit form.
Let $\vx=(x_1,\ldots,x_n)$.
Let $g_1,\ldots,g_k$ be the internal gates in $\psi$ and let $g_k$ be the output gate.
We will represent them with ``fresh'' boolean variables $\vu=(u_1,\ldots,u_k)$.
We will also have fresh variables $\vv=(v_1,\ldots,v_m)$ to represent $y_1,\ldots,y_m$.

Consider the following DQBF:
\begin{align*}
\Psi' & := 
\forall \vx\ \forall \vu\ \forall \vv \
\exists y_1(\vz_1) \cdots \ \exists y_m(\vz_m)
 \Big(\Big(\bigwedge_{i=1}^m v_i\leftrightarrow y_i\Big)\quad \wedge \quad \Big(\bigwedge_{i=1}^k \phi_i\Big)\Big) \ \to \ u_k
\end{align*}
Intuitively, each $\phi_i$ states that ``the value $u_i$ is the value of gate $g_i$.''
More formally, if the gate $g_i$ is an OR-gate with inputs $g_{j_1},\ldots,g_{j_t}$,
then $\phi_i:= u_i \leftrightarrow (u_{j_1}\vee \cdots \vee u_{j_t})$.
If some $g_{j_l}$ is an input gate $x_h$, then replace $u_{j_l}$ with $x_h$.
If it is an input gate $y_h$, then replace $u_{j_l}$ with $v_h$.
Similarly, if $g_i$ is an AND-gate with inputs $g_{j_1},\ldots,g_{j_t}$,
then $\phi_i := u_i \leftrightarrow (u_{j_1}\wedge \cdots \wedge u_{j_t})$.
If $g_i$ is a NOT-gate with input $g_j$,
then $\phi_i := u_i \leftrightarrow \neg u_j$.
It is routine to verify that $\Psi$ and $\Psi'$ are equisatisfiable.

Each $\phi_i$ can be rewritten in CNF.
Thus, the matrix can be rewritten in DNF as follows.
\begin{align*}
&
\bigvee_{i=1}^m (v_i\wedge \neg y_i)\vee (\neg v_i \wedge y_i) \quad\vee \quad\bigvee_{i=1}^k \neg\phi_i \ \vee \ u_k
\end{align*}
Note that each $\phi_i$ uses only variables from $\vx$, $\vu$ and $\vv$.
Thus, the only terms that use the existential variables are $v_i\wedge \neg y_i$ or $\neg v_i \wedge y_i$
which contains at most one existential variables.
This completes the proof of Proposition~\ref{prop:dnf}.

\subsection{A more detailed proof of Theorem~\ref{theo:nexpt-succ-proj}}
\label{app:sec:succ-proj}

We present a more detailed proof of Theorem~\ref{theo:nexpt-succ-proj}.
Let $L\in \nexpt$ and $M$ be a 1-tape NTM that accepts $L$ in time $2^{p(n)}$ for some polynomial $p(n)$.
Let $Q$ and $\Gamma$ be the set of states and the tape alphabet of $M$.
For simplicity, we assume that $M$ accepts only at exactly step $2^{p(n)}$ when the head is in the leftmost cell. 
We also assume that when $M$ makes a non-deterministic move, the head stays still.

Let $\Delta = \Gamma \cup (Q\times \Gamma)$
An accepting run of $M$ on $w$ is represented by a function  
$g:[2^{p(n)}]\times [2^{p(n)}]\to \Delta$,
where $g(i,j)$ is the symbol in cell $i$ in time $j$ in the run.
When $g(i,j)\in Q\times\Gamma$, it indicates the position of the head is in cell $i$ and the state of $M$.

Let the input word $w$ be $b_0b_1\cdots b_{n-1}$.
For a function $g:[2^{p(n)}]\times [2^{p(n)}]\to\Delta$
to represent a correct accepting run of $\cT$ on $w$,
the following must hold for every $i,j,i',j'\in [2^{p(n)}]$.
\begin{enumerate}[(a)]
\item
If $i=j=0$, then $g(i,j)=(q_0,b_0)$.
\item
If $1\leq i \leq n-1$ and $j=0$, then $g(i,j)=b_i$.
\item
If $n \leq i$ and $j=0$, $g(i,0)$ is the blank symbol.
\item
If $j=j'$, then at most one of $g(i,j)$ and $g(i',j')$ is an element of $Q\times \Gamma$.
\item
If $i=0$ and $j=2^{p(n)}-1$,
then $g({i,j})=(q_{acc},\sigma)$ for some tape symbol $\sigma$ where $q_{acc}$ is the accepting state of $\cT$.

\item
If $|i-i'|\leq 1$ and $|j-j'|\leq 1$ (modulo $2^{p(n)}$),
then $g(i,j)$ and $g(i',j')$ must obey the transitions in $\cT$.

For example, if there is a transition $(s,0)\to(s',1,\text{stay})$ and $(s,0)\to(s'',0,\text{stay})$,
we have the following condition.
\begin{itemize}
\item
If $i=i'$ and $j'=j+1$ and $g(i,j)=(s,0)$,
then $g(i,j')=(s',1)$ or $(s'',0)$.
\end{itemize}
Similar condition can be defined for each transition.
\end{enumerate}
Obviously, $[2^{p(n)}]$ can be encoded with $\Sigma^{p(n)}$ and $\Delta$ with $\Sigma^{\ell}$, where $\ell=\log |\Delta|$.

It is routine to design an algorithm that on input $w$, constructs a circuit $C$ (with access to the transitions in $\cM$)
that given $i,j,g(i,j),i',j',g(i',j')$, verifies whether all properties (a)--(f) hold.
That is, $C(i,j,g(i,j),i',j',g(i',j'))=1$ iff $(i,j,g(i,j),i',j',g(i',j'))$ satisfies (a)--(f).
In other words, the function $g$ agrees with $C$ iff it represents a correct accepting run of $M$ on $w$.
Therefore, $w\in L$ iff there is a function that agrees with $C$.


\subsection{More concrete reductions from other $\nexpt$-complete problems}
\label{app:subsec:more-graphs}

\paragraph*{Reduction from succinct independent set}
Succinct independent set is defined as follows.
The input is a circuit $C(\vu,\vv)$ and an integer $k$ (in binary).
The task is to decide if $G(C)$ has an independent set of size $k$.

Let $|\vu|=|\vv|=n$ and $m=\log k$.
As explained in the body, we may assume that $k$ is a power of $2$.
To avoid clutter, we also assume $G(C)$ does not contain self-loop, i.e.,
for every $\va\in \Sigma^n$, $C(\va,\va)=0$.

We represent a set $I\subseteq \Sigma^n$ with size $k$
with an injective function $g:\Sigma^m\to\Sigma^n$ where 
$g(\bin_m(i))$ denotes the $i$-th element in $I$, for every $i\in[k]$.
Now, for $I$ to be an independent set in $G(C)$ with size $k$,
the following property must hold for every $\va_1,\va_2 \in \Sigma^m$.
\begin{itemize}
\item
If $\va_1\neq \va_2$, then $g(\va_1)\neq g(\va_2)$ and $(g(\va_1),g(\va_2))$ is not an edge in $G(C)$.
\end{itemize}
The succinct projection simply outputs a circuit that expresses this property,
i.e., it outputs the following circuit $D(\vx_1,\vy_1,\vx_2,\vy_2)$ where $|\vx_1|=|\vx_2|=m$ and $|\vy_1|=|\vy_2|=n$:
\begin{align*}
& \vx_1\neq \vx_2 \ \to\
\big( \vy_1\neq \vy_2 \ \wedge \ C(\vy_1,\vy_2)=0\big) 
\end{align*}
Obviously, a function $g$ represents an independent set with size $k$ iff it agrees with $D$.
Thus, $G(C)$ has an independent set with size $k$ iff there is a function that agrees with $D$.

\paragraph*{Reduction from succinct subgraph isomorphism}

The input to {\em succinct subgraph isomorphism} is two circuits $C_1(\vu_1,\vv_1)$ and $C_2(\vu_2,\vv_2)$.
The task is to decide if $G(C_1)$ is isomorphic to a subgraph of $G(C_2)$.

Let $|\vu_1|=|\vv_1|=n_1$ and $|\vu_2|=|\vv_2|=n_2$ and $n_1\leq n_2$.
Note that $G(C_1)$ is isomorphic to a subgraph of $G(C_2)$ iff 
there is a function $g:\Sigma^{n_1}\to \Sigma^{n_2}$ such that
the following holds for every $\va_1,\va_2 \in \Sigma^{n_1}$.
\begin{enumerate}[(S1)]
\item
If $\va_1\neq \va_2$, then $g(\va_1)\neq g(\va_2)$.
\item
$C_1(\va_1,\va_2)=C_2(g(\va_1),g(\va_2))$.
\end{enumerate}
The succinct projection outputs the circuit that expresses both (S1) and (S2),
i.e., the circuit $D(\vx_1,\vy_1,\vx_2,\vy_2)$: 
\begin{align*}
& \big(\vx_1\neq\vx_2 \ \to \ \vy_1\neq \vy_2\big)
\ \wedge\
C_1(\vx_1,\vx_2) \ = \ C_2(\vy_1,\vy_2)
\end{align*}
It is immediate that
a function agrees with $D$ iff
it represents an isomorphism from $G(C_1)$ to a subgraph of $G(C_2)$.

\paragraph*{Reduction from succinct vertex cover}
The input to {\em succinct vertex cover} is a circuit $C(\vu,\vv)$ and an integer $k$ (in binary).
The task is to decide if $G(C)$ has a vertex cover of size at most $k$.

Let $|\vu|=|\vv|=n$ and $m=\log (k+1)$.
Here we assume that $G(C)$ is an undirected graph, i.e.,
for every $\va_1,\va_2\in \Sigma^n$, $C(\va_1,\va_2)=C(\va_1,\va_2)$.

We represent a subset $W\subseteq \Sigma^n$ with size at most $k$
with a function $g:\Sigma^n\to\Sigma^m$
where $g$ is injective on the codomain $\{\bin_m(i) \mid i \leq k-1\}$.
That is, if $g(\va_1)=g(\va_2)$ and $\num(g(\va_1)) \leq k-1$, then $\va_1=\va_2$.
We use such $g$ to represent the set $W = \{w \mid \num(g(w))\leq k-1\}$.
For $W$ to be a vertex cover,
the following must hold for every $w_1,w_2 \in \Sigma^n$.
\begin{enumerate}[(V1)]
\item
If $g(\va_1) = g(\va_2)$ and $\num(g(\va_1)) \leq k-1$, then $\va_1=\va_2$. 

That is, $g$ is injective on codomain $\{\bin_m(i) \mid i \leq k-1\}$.
\item
If $C(\va_1, \va_2)=1$, then $\num(g(\va_1))\leq k-1$ or $\num(g(\va_2))\leq k-1$. 

That is, if $(\va_1,\va_2)$ is an edge, then one of them must be in the vertex cover.
\end{enumerate}
This property can be described by 
the following circuit $D(\vx_1,\vy_1,\vx_2,\vy_2)$ where $|\vx_1|=|\vx_2|=n$ and $|\vy_1|=|\vy_2|=m$:
\begin{align*}
&
\Big( \Big(\vv_1 = \vv_2 \ \wedge \ \num(\vv_1) \leq k-1\Big) \ \to\ \vu_1 = \vu_2 \Big)
\ \wedge \
\Big( C(\vu_1, \vu_2)=1\ \to\ \Big(\num(\vv_1) \leq k-1\ \vee\ \num(\vv_2) \leq k-1\Big) \Big)
\end{align*}
Note that a circuit for testing $\num(\vv_1) \leq k-1$ can be constructed in polynomial time in $m$.
It can be easily verified that a function $g$ represents a vertex cover with size at most $k$ in $G(C)$ iff it agrees with $D$.
Thus, $G(C)$ has a vertex cover with size at most $k$ iff there is a function that agrees with $D$.

\paragraph*{Reduction from succinct dominating set}
The input to {\em succinct dominating set} is a circuit $C(\vu,\vv)$ and an integer $k$ (in binary).
The task is to decide if $G(C)$ has a dominating set of size at most $k$.

Let $|\vu|=|\vv|=n$ and $m=\log k$.
We first assume that $k$ is a power of $2$.
We also assume that $G(C)$ is an undirected graph, i.e.,
for every $\va_1,\va_2\in \Sigma^n$, $C(\va_1,\va_2)=C(\va_2,\va_1)$.

We will view a set $W=\{\vc_0,\ldots,\vc_{k-1}\}\subseteq \Sigma^n$
as $W'=\{(0,\vc_0),\ldots,(k-1,\vc_{k-1})\}$, i.e.,
each element in $W'$ is a pair $(i,\vc_i)$ where $i$ is the ``index'' of $\\vc_i$ (in the set $W$).
A dominating set $W'$ in $G(C)$
can be represented as a function $g:\Sigma^n\to\Sigma^m\times \Sigma^n$
that satisfies the following properties.
For every $\va_1,\va_2\in \Sigma^n$, where $g(\va_1)=(\vb_1,\vc_1)$ and $g(\va_2)=(\vb_2,\vc_2)$:
\begin{enumerate}[(D1)]
\item
If $\va_1\in W$, then $\vc_1=\va_1$.
\item
If $\va_1\notin W$, then $\va_1$ is adjacent to $\vc_1$.
\item 
If $\vb_1=\vb_2$, then $\vc_1=\vc_2$.
\end{enumerate}
Intuitively, $g(\va_1)=(\vb_1,\vc_1)$ means that $\vb_1$ is the index of $\va_1$, if $\va_1$ is in the dominating set $W'$,
indicated by the fact that $\vc_1=\va_1$.
If $\va_1$ is not in $W'$, then $\va_1$ must be adjacent to $\vc_1$.
This is what is stated by (D1) and (D2).
Property (D3) simply states that the index of the element in the image must be unique.
The succinct projection outputs the circuit $D(\vx_1,\vy_1,\vz_1,\vx_2,\vy_2,\vz_2)$ 
that expresses (D1)--(D3),
where $|\vx_1|=|\vx_2|=|\vz_1|=|\vz_2|=n$ and $|\vy_1|=|\vy_2|=m$:
\begin{align*}
& \big(\vy_1=\vy_2  \to  \vz_1= \vz_2\big)
\ \wedge\ 
\big(\vx_1\neq \vz_1  \to  C(\vx_1,\vz_1)=1\big)
\end{align*}
It is routine to show that a function $g$ properly represents a dominating set $W'$ with size at most $k$
iff it agrees with $D$.

\paragraph*{Reduction form succinct $\SAT$}

In the standard representation an instance of $\SAT$ is a set of clauses $c_1,\ldots,c_{\ell}$ 
over some variables $v_1,\ldots,v_k$.
Each clause can be encoded as a string in $\Sigma^{\log \ell}$
and each variable a string in $\Sigma^{\log k}$.
Each literal can be encoded as a pair $(b,\va)\in \Sigma\times \Sigma^{\log k}$,
where the bit $b$ represents the ``negativeness'' of the literal, i.e.,
$(0,\va)$ denotes the positive literal $\va$ and $(1,\va)$ the negative literal $\neg \va$.

The succinct representation of a $\SAT$ instance is a circuit $C(t,\vu,\vv)$, 
where $|t|=1$, $|\vu|=\log k$ and $|\vv|=\log \ell$,
such that the following holds for every $\va_1\in \Sigma^{\log k}$ and $\va_2\in \Sigma^{\log \ell}$.
\begin{itemize}
\item
$C(0,\va_1,\va_2) =1$ iff clause $\va_2$ contains literal $\va_1$.
\item
$C(1,\va_1,\va_2) =1$ iff clause $\va_2$ contains literal $\neg \va_1$.
\end{itemize}
Let $F(C)$ denote the boolean formula represented by the circuit $C$.
We define the problem {\em succinct-$\SAT$} as on input circuit $C(t,\vu,\vv)$,
decide whether $F(C)$ has a satisfying assignment.

In the following we will present a succinct projection for succinct-$\SAT$.
Let $C(t,\vu,\vv)$ be an instance of succinct-$\SAT$, where $|\vu|=m$ and $|\vv|=n$.
Note that a satisfying assignment of $F(C)$ can be viewed as 
a function $g:\Sigma^n\to\Sigma\times \Sigma^m$ where for every $\va_1,\va_2\in \Sigma^n$, the following holds.
Let $g(\va_1)=(b_1,\vc_1)$ and $g(\va_2)=(b_2,\vc_2)$.
\begin{enumerate}[(a)]
\item
$C(b_1,\vc_1,\va_1)=1$.
\item 
If $\vc_1=\vc_2$, then $b_1=b_2$.
\end{enumerate}
Intuitively $g(\va_1)=(b_1,\vc_1)$ means the literal $(b,\vc_1)$ makes clause $\va_1$ true.
Condition (a) states that literal $(b,\vc_1)$ is indeed inside clause $\va_1$.
Condition (b) ensures that there is no contradicting literals that are picked to make two different clauses true.
This property can be described by the following circuit $D(\vx_1,y_1,\vz_1,\vx_2,y_2,\vz_2)$
where $|\vx_1|=|\vx_2|=n$, $|y_1|=|y_2|=1$ and $|\vz_1|=|\vz_2|=m$:
\begin{align*}
& C(y_1,\vz_1,\vx_1)=1
\quad\wedge\quad
\big(\vz_1= \vz_2 \ \to \ y_1=y_2\big)
\end{align*} 
It is routine to verify that
a function $g$ agrees with $D$
iff
it represents a satisfying assignment of $F(C)$.


\subsection{Using succinct projections to obtain reductions to other $\nexpt$-complete logics}
\label{app:subsec:logics}

In the main text we have shown how to use succinct projections
to obtain explicit reductions from some concrete $\nexpt$-complete problems/logics to $\satdqbf$.
In this appendix we will show how succinct projections can be used to 
obtain reductions to other logics.

To this end, we introduce the class $\fotwo_1$ which is the class of $\fo$ sentences (without the equality predicate)
of the form:
\begin{eqnarray}
\label{eq:fotwoone}
\Phi
& := &
\forall x \forall y \ \alpha(x,y) \quad\wedge\quad
 \forall x \exists y \ \beta(x,y)
\end{eqnarray}
where $\alpha(x,y)$ and $\beta(x,y)$ are quantifier free formulas
using only unary predicates and without the equality predicate.
Note that $\fotwo_1$ lies in the intersection between $\fotwo$ and the L\"owenheim class.

For technical convenience, we may assume that $\alpha(x,y)$ and each $\beta_i(x,y)$
are written in a {\em circuit form}, i.e.,
a boolean circuit whose input gates are all the possible atomic predicates.
If $S_1,\ldots,S_p$ are all the unary predicates used in the formula,
then $\alpha(x,y)$ and $\beta(x,y)$ are circuits 
with input gates $S_1(x),\ldots,S_p(x),S_1(y),\ldots,S_p(y)$.
Again, such form can be transformed efficiently into
the standard $\fo$ format via Tseitin transformation,
though such transformation requires introducing new binary predicates.

For $\fotwo_1$ and other subclasses of $\fo$,
we adopt standard notations from~\cite{enderton,libkin}.
We use $\cA$ and $\cB$ to denote structures with domain $A$ and $B$
and $P^{\cA}$ denotes the interpretation of a predicate $P$ in $\cA$.
For a unary predicate $S$ and an element $a\in A$, 
$\mychi_S^{\cA}(a)\in \Sigma$ denotes the indicator bit for the membership of $a$ in $S^{\cA}$,
i.e., $\mychi_S^{\cA}(a)=1$ if and only if $a\in S^\cA$.
For unary predicates $S_1,\ldots,S_n$,
$\mychi_{S_1\cdots S_n}^{\cA}(a)$ denotes the string $\mychi_{S_1}^{\cA}(a)\mychi_{S_2}^{\cA}(a)\cdots \mychi_{S_n}^{\cA}(a)$.
When $\cA$ is clear from the context, we omit $\cA$ and write $\mychi_{S}(a)$ and $\mychi_{S_1\cdots S_n}(a)$.

Let $\satfotwoone$ be the problem that given an $\fotwo_1$ sentence,
decide if it is satisfiable.
It is known that $\satfotwoone$ is $\nexpt$-complete~\cite{Furer83,gkv}.

In this appendix we will present the web of reductions as shown in Figure~\ref{fig:web}.
Note that the reductions from $\satbsr$, $\satfotwo$ and $\satmon$ to $\satesb$
have been presented in the main body.
So, what is left is the reduction from $\satesb$ to $\satbsr$
and from $\satesb$ to $\satfotwoone$.

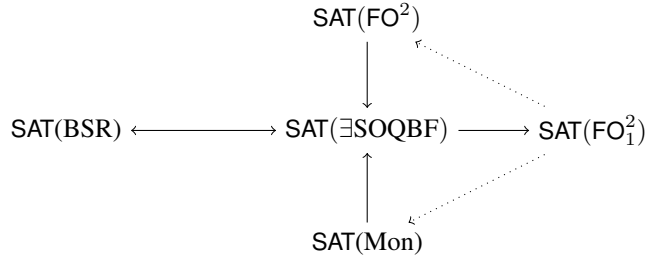
\begin{figure}[t]
\begin{center}
\begin{tikzpicture}

\node at (-4,0) (bsr) {$\satbsr$};
\node at (0,0) (eso) {$\satesb$};
\node at (3,0) (fo21) {$\satfotwoone$};
\node at (0,-1.5) (l) {$\satmon$};
\node at (0,1.5) (fo2) {$\satfotwo$};

\draw[<->] (bsr) to (eso);
\draw[->] (l) to (eso);
\draw[->] (fo2) to (eso);
\draw[->] (eso) to (fo21);
\draw[dotted,->] (fo21) to (fo2);
\draw[dotted,->] (fo21) to (l);

\end{tikzpicture}
\end{center}
\caption{The web of reductions between the considered logics considered in this paper.
Solid arrows indicate the direction of the reduction
and dotted arrows mean ``subsumed by''.}
\label{fig:web}
\end{figure}

We first show how to use succinct projections to obtain reductions from $\nexpt$-complete problems to $\satfotwoone$.

\subsubsection{Reduction to $\satfotwoone$.}
Next, we present the reduction from any language $L \in \nexpt$ to $\satfotwoone$.
Let $L\in \nexpt$ and $\cM$ its succinct projection.

We need a few notations.
Let $R_1,\ldots,R_n,S_1,\ldots,S_m$ be unary predicates.
Define the formulas $\feq_{R_1\cdots R_n}(x,y)$ and $\fsuc_{R_1\cdots R_n}(x,y)$ as follows.
\begin{eqnarray*}
\feq_{R_1\cdots R_n}(x,y) & := & \bigwedge_{i=1}^n R_i(x)\iff R_i(y)
\\
\fsuc_{R_1\cdots R_n}(x,y) & := &  \bigvee_{i=1}^n \Big(
 \neg R_i(x) \wedge R_i(y) \wedge
\bigwedge_{j=1}^{i-1}  \big( R_j(x) \wedge \neg R_j(y)\big)
\wedge
\bigwedge_{j=i+1}^n \big(R_j(x) \iff R_j(y)\big)\Big)
\\
& &
\vee\quad \bigwedge_{i=1}^n \Big(R_i(x) \wedge \neg R_i(y)\Big)
\end{eqnarray*}
$\feq_{S_1\cdots S_m}(x,y)$ and $\fsuc_{S_1\cdots S_m}(x,y)$ are defined analogously.
The meaning of these formulas is as follows.
$\cA,x/a,y/b \models \feq_{R_1\cdots R_n}(x,y)$ iff $\mychi_{R_1\cdots R_n}(a)=\mychi_{R_1\cdots R_n}(b)$
and 
$\cA,x/a,y/b \models \fsuc_{R_1\cdots R_n}(x,y)$ iff $\mychi_{R_1\cdots R_n}(a)+1 = \mychi_{R_1\cdots R_n}(b)$.
Note that $\cA\models\ \forall x \exists y \fsuc_{R_1\cdots R_n}(x,y)$
if and only if for every $w\in \Sigma^n$, there is an element $a$ in $\cA$ such that $\mychi_{R_1\cdots R_n}(a)=w$.

Let $C(\vu_1,\vv_1,\vu_2,\vv_2)$ be a circuit where $|\vu_1|=|\vu_2|=n$ and $|\vv_1|=|\vv_2|=m$.
Let $\vu_i=(u_{i,1},\ldots,u_{i,n})$ and $\vv_i=(v_{i,1},\ldots,v_{i,m})$, for each $i\in \{1,2\}$.

We write $C\big[\vu_1/\vR(x), \vv_1/\vS(x),\vu_2/\vR(y),\vv_2/\vS(y)\big]$ 
to denote the quantifier free $\fo$ formula (in circuit form) obtained from $C$ by 
replacing each $u_{1,i}$ with $R_i(x)$,
each $v_{1,i}$ with $S_i(x)$, each $u_{2,i}$ with $R_i(y)$ and each $v_{2,i}$ with $S_i(y)$.

The reduction from $L$ to $\satfotwoone$ is presented in the following algorithm.

\noindent
\begin{center}
\begin{tabular}{|l|}
 \hline
\multicolumn{1}{|c|}{\bf Algorithm~2: Reducing $L\in \nexpt$ to $\satfotwoone$}
\\ 
\hline
\hline
{\bf Input:} $w\in \Sigma^*$.
\\
{\footnotesize 1:} Run the succinct projection of $L$ on $w$.
\\
{\footnotesize 2:} Let $C(\vx_1,\vy_1,\vx_2,\vy_2)$ be the output circuit where:
\\
\hspace{0.25cm}
$|\vx_1|=|\vx_2|=n$, $|\vy_1|=|\vy_2|=m$, $\vy_1=(y_{1,1},\ldots,y_{1,m})$
 and $\vy_2=(y_{2,1},\ldots,y_{2,m})$.
\\
{\footnotesize 3:} 
Let $R_1,\ldots,R_n,S_1,\ldots,S_m$ be unary predicates.
\\
{\footnotesize 4:} 
Construct the sentence $\Phi:= \forall x \forall y \big(\alpha_1(x,y)\wedge \alpha_2(x,y)\big) \wedge \forall x \exists y\ \beta(x,y)$ where:
\\
\hspace{0.45cm}
--
$\alpha_1(x,y)\ :=\ \ \feq_{R_1\cdots R_n}(x,y)\to \feq_{S_1\cdots S_m}(x,y)$.
\\
\hspace{0.45cm}
--
$\alpha_2(x,y)\ :=\ \ C\big[\vu_1/\vR(x), \vv_1/\vS(x),\vu_2/\vR(y),\vv_2/\vS(y)\big]$.
\\
\hspace{0.45cm}
--
$\beta(x,y)\ \ :=\ \ \fsuc_{R_1\cdots R_n}(x,y)$.
\\
{\footnotesize 5:} 
Output $\Phi$.
\\
\hline
\end{tabular}
\end{center}

To prove the correctness of Algorithm~2,
we need a few terminology.
Let $g:\Sigma^n\to\Sigma^m$ be an arbitrary function.
Let $\cA$ be a structure with unary predicates $R_1,\ldots,R_n,S_1,\ldots,S_m$.
We say that {\em $\cA$ encodes $g$}, if the following holds.
\begin{itemize}
\item
For every $w\in \Sigma^n$, there is $a\in A$ such that $\mychi_{R_1\cdots R_nS_1\cdots S_m}(a)=wg(w)$.
\item
Conversely, for every $a\in A$, there is $w\in\Sigma^n$ such that  $\mychi_{R_1\cdots R_nS_1\cdots S_m}(a)=wg(w)$.
\end{itemize}
Intuitively, the function $g$ is represented by a structure where each $wg(w)$ is encoded
by the membership of the elements in $R_1,\ldots,R_n,S_1,\ldots,S_m$.
Note that if $\cA\models \forall x \forall y\ \alpha_1(x,y) \ \wedge \ \forall x \exists y \ \beta(x,y)$,
then $\cA$ encodes some function $g:\Sigma^n\to\Sigma^m$.

Let $C$ and $n$ and $m$ be as in Steps 1 and 2 in Algorithm~2.
To prove the correctness of Algorithm~\ref{alg:red-to-fo2}, we show the following.
\begin{enumerate}[(a)]
\item
For every function $g$ that agrees with $C$, there is $\cA\models \Phi$ that encodes $g$.
\item
Conversely, for every $\cA\models \Phi$,
there is a function $g$ that agrees with $C$ such that $\cA$ encodes $g$.
\end{enumerate}
From (a) and (b), it follows immediately that $\Phi$ is satisfiable iff $w\in L$.

To prove (a), let $g$ be a function that agrees with $C$.
Let $\cA$ be a structure that encodes $g$.
Thus, $\cA\models \forall x \forall y\ \alpha_1(x,y) \ \wedge \ \forall x \exists y \ \beta(x,y)$.
Since $g$ agrees with $C$, it follows also that $\cA\models \forall x \forall y \ \alpha_2(x,y)$.
Therefore, $\cA\models \Phi$.

To prove (b), let $\cA\models \Phi$.
Let $g:\Sigma^n\to\Sigma^m$ be the function encoded by $\cA$.
Such function $g$ exists since $\cA\models \forall x \forall y\ \alpha_1(x,y) \wedge \forall x \exists y \ \beta(x,y)$.
Moreover, since $\cA\models \forall x \forall y \ \alpha_2(x,y)$,
it follows that $g$ agrees with $C$.

%
%

\subsubsection{Reduction from $\satesb$ to $\satbsr$}

Note that by standard Skolemization, every $\esb$ formula can be transformed into an equivalent formula 
in the normal form: 
$$
\exists f \forall v_1 \forall v_2  \cdots \forall v_{n} \ \psi
$$
That is, there is only one second order quantifiers and all the first-order quantifiers are universal.

Let $\Psi:=\exists f  \forall \vu \ \psi$ be the input $\esb$ formula,
where $\vu= (u_1, \ldots,u_n)$.

We construct a BSR sentence of the form:
\begin{eqnarray*}
\Phi & := & \exists x_0\exists x_1 \forall y_1\cdots \forall y_n \quad x_0\neq x_1 \ \wedge\
 \bigwedge_{i=1}^n \big(y_i = x_0\ \vee \ y_i=x_1\big) \quad \wedge \quad \varphi
\end{eqnarray*}
where  $\varphi$ is obtained from $\psi$ by replacing each 
$u_i$ with $y_i=x_1$, each $\neg u_i$ with $y_i=x_0$
and each function variable $f(\vz)$ with $P(\vz[u_1/y_1,\ldots,u_n/y_n])$.
Here $\vz[u_1/y_1,\ldots,u_n/y_n]$ denotes the vector of variable
where each $u_i$ is replaced with $y_i$.
Recall that $\vz \subseteq \{u_1,\ldots,u_n\}$ for each function variable $f(\vz)$.

Intuitively, the boolean algebra with an interpretation $F$ that makes $\Psi$ true
is viewed as a model with two elements $x_0$ and $x_1$, where $F$ is represented a predicate.
It is not difficult to show that $\Psi$ is a true formula iff
$\Phi$ is satisfiable.

\subsubsection{Reduction from $\satesb$ to $\satfotwoone$}
We will present a succinct projection for $\satesb$.
Let $\Psi:=\exists f  \forall \vu \ \psi$ be an $\esb$ formula,
where $\vu= (u_1,\ldots,u_n)$ and $\ar(f)=k$.
Let there be $m$ function variable $f(\vz_1),\ldots,f(\vz_m)$ in $\psi$.

We first introduce a few notations.
For a vector of variables $\vz$ where $\vz\subseteq \{u_1,\ldots,u_n\}$,
and $w=b_1\cdots b_n\in \Sigma^n$, we write $\vz[\vu/w]$ to denote the string of length $|\vz|$
obtained by replacing each $u_i$ with $b_i$.
Analogously, for a vector $\vv=(v_1,\ldots,v_n)$, we write $\vz[\vu/\vv]$
to denote the vector of variables obtained by replacing each $u_i$ with $v_i$.
For a function $F:\Sigma^k\to\Sigma$,
we define a function $g_F:\Sigma^n\to\Sigma^m$,
where $g(w)=F(\vz_1[\vu/w])\cdots F(\vz_m[\vu/w])$.
Note that $g(w)$ is the string $F(\vz_1)\cdots F(\vz_m)$
when the variables $\vu$ are assigned with $w$.

We now describe the succinct projection for $\satesb$.
It constructs the following circuit $D(\vv_1,\vv_1',\vv_2,\vv_2')$, 
where $|\vv_1|=|\vv_2|=n$ and $|\vv_1'|=|\vv_2'|=m$.
Let $\vv_1' =(v_{1,1}',\ldots,v_{1,m}')$ and $\vv_2' =(v_{2,1}',\ldots,v_{2,m}')$.
\begin{eqnarray*}
D(\vv_1,\vv_1',\vv_2,\vv_2') & := & \psi'(\vv_1,\vv_1') \ \wedge \
\bigwedge_{i=1}^{\ell}\bigwedge_{j=1}^{\ell}
\vz_i[\vu/\vv_1] = \vz_j[\vu/\vv_2] \ \to \ v_{1,i}'=v_{2,j}'
\end{eqnarray*}
where $\psi'(\vv_1,\vv_1')$ is the formula obtained from $\psi$
by replacing every function variable $f(\vz_i)$ with the $v_{1,i}'$.
Obviously, $D$ can be constructed in polynomial time.

The correctness of the succinct projection follows from the following two statements.
\begin{itemize}
\item
For every function $F:\Sigma^{\ell}\to\Sigma$ that makes $\Phi$ true,
the function $g_F$ agrees with $D$.
\item
Conversely, for every function $g$ that agrees with $D$,
there is a function $F$ that makes $\Phi$ true such that $g_F=g$.
\end{itemize}
The proof is routine and hence, omitted.

\end{document}